\newcommand{\bbn}{\mathbb{N}}
\newcommand{\bbr}{\mathbb{R}}
\newcommand{\bbc}{\mathbb{C}}
\begin{document}

\title*{Derivative pricing under the possibility of long memory in the supOU stochastic volatility model}
\titlerunning{Derivative pricing in the supOU SV model} 
\author{Robert Stelzer and  Jovana Zavi\v{s}in}
\institute{Robert Stelzer\at Institute of Mathematical Finance, Ulm University, Helmholtzstr. 18, D-89081 Ulm, Germany, \email{robert.stelzer@uni-ulm.de}
\and Jovana Zavi\v{s}in \at \email{jovana.zavisin@gmail.com}}
%
%
\maketitle

\abstract{We consider the supOU stochastic volatility model which is able to exhibit long-range dependence. For this model we give conditions for the discounted stock price to be a martingale, calculate the characteristic function, give a strip where it is analytic and discuss the use of Fourier pricing techniques. \\
Finally, we present a concrete specification with polynomially decaying autocorrelations and calibrate it to observed market prices of plain vanilla options.}

\textbf{AMS Subject Classification 2010: }\\
\begin{tabular}{ll}
Primary: & \quad 91G20 , 60G51 \\
Secondary: & \quad 91B25
\end{tabular}
\keywords{calibration, Fourier pricing, L\'evy basis, long memory, superposition of Ornstein-Uhlenbeck type processes, stochastic volatility}
\section{Introduction}
The Ornstein-Uhlenbeck (OU) type stochastic volatility (SV) model introduced in \cite{Barndorffetal2001c} is one of the most popular stochastic volatility models for prices of financial assets driven by a L\'evy process (see e.g. \cite{Schoutens2003,Contetal2004}). It covers many of the stylized facts typically encountered in financial data (cf. \cite{Guillaumeetal1997,Cont2001}). Over the years many variants have been introduced, for instance a variant with two sided jumps in \cite{WILM:WILM10217} or a multivariate extension in \cite{pigorsch:stelzer:2009}. 

In this paper we consider a variant of the model which additionally can cover the stylized fact of long-range dependence (or slower than exponentially decaying autocorrelations), the supOU stochastic volatility model. In this model we specify the volatility as a superposition of Ornstein-Uhlenbeck (thus ``supOU'') processes, which have been introduced in \cite{barndorff:2001}. Various features of this volatility model (in a multidimensional setting) have been considered in   \cite{BarndorffetStelzer2009sv,barndorff:stelzer:2011a,moser:stelzer:2011,StelzerTosstorff2011}.

The focus of the present paper is on derivative pricing in and calibration of the univariate supOU SV model similar to the papers \cite{MuhleKarbePfaffelStelzer2009,NicolatoVenardos2003} in the (multivariate) OU type SV model. To this end we first briefly review the model in Section \ref{sec:1}. In Section \ref{sec:2} we give conditions on the parameters such that the discounted stock price process is a martingale which implies that under these conditions the model can be used to describe the risk neutral dynamics of a financial asset. Thereafter, we start Section~\ref{sec:3} with a review of Fourier pricing. Then, we give the characteristic function of the log asset price in the supOU SV model and show conditions for the moment generating function to be sufficiently regular so that Fourier pricing is applicable. 
Finally, we present a concrete specification, the $\Gamma$-supOU SV model, in Section \ref{sec:4} and discuss its calibration to market data which we illustrate with a small example using options on the DAX. Finally, we discuss a subtle issue regarding how to employ the calibrated model to calculate prices of European options with a general maturity.

\section{A review of the supOU stochastic volatility model}\label{sec:1}
We briefly review the definition and the most important known facts of the supOU stochastic volatility model introduced in \cite{BarndorffetStelzer2009sv}. More background on supOU processes can be found in \cite{StelzerTosstorff2011,FasenetCklu2007,barndorff:2001,barndorff:stelzer:2011a}.

In the following $\mathbb{R}_{-}$ denotes the set of negative real numbers and $\mathcal{B}_b(\mathbb{R}_{-}\times \mathbb{R})$ denotes the bounded Borel sets of $\mathbb{R}_{-}\times \mathbb{R}$.

\begin{definition}
A family $\Lambda = \{\Lambda(B): B\in\mathcal{B}_b(\mathbb{R}_{-}\times \mathbb{R})\}$ of real-valued random variables is called a real-valued L\'{e}vy basis (infinitely divisible independently scattered random measure) on $\mathbb{R}_{-}\times \mathbb{R}$ if:

\begin{itemize}
\item the distribution of $\Lambda(B)$ is infinitely divisible for all $B\in\mathcal{B}_b(\mathbb{R}_{-}\times \mathbb{R})$,
\item for any $n\in\mathbb{N}$ and pairwise disjoint sets $B_1,...,B_n \in \mathcal{B}_b(\mathbb{R}_{-}\times \mathbb{R})$ the random variables $\Lambda(B_1),...,\Lambda(B_n)$ are independent,
\item for any sequence of pairwise disjoint sets $B_n\in\mathcal{B}_b(\mathbb{R}_{-}\times \mathbb{R})$ with $n\in\mathbb{N}$ satisfying $\cup_{n\in\mathbb{N}} B_n \in \mathcal{B}_b(\mathbb{R}_{-}\times \mathbb{R})$ the series $\sum_{n=1}^{\infty} \Lambda(B_n)$ converges a.s. and $\Lambda(\cup_{n\in\mathbb{N}}B_n) = \sum_{n=1}^{\infty}\Lambda(B_n)$.
\end{itemize}
\end{definition}

We consider only L\'{e}vy bases with characteristic functions of the form
\begin{align*}
 \mathbb{E}(\exp (iu\Lambda(B))) = \exp(\varphi(u)\Pi(B))
\end{align*}
for all $u\in\mathbb{R}$ and all $B\in \mathcal{B}_b(\mathbb{R}_{-}\times \mathbb{R})$, where $\Pi = \pi \times \lambda$ is the product of a probability measure $\pi$ on $\mathbb{R}_{-}$ and the Lebesgue measure $\lambda$ on $\mathbb{R}$ and
\begin{align*}
\varphi(u) = iu\gamma_0 +\int_{\mathbb{R_+}}\bigg{(} e^{iux}-1 \bigg{)} \nu(dx)
\end{align*}
is the cumulant transform of an infinitely divisible distribution on $\bbr_+$ with L\'{e}vy-Khintchine triplet $(\gamma_0,0,\nu)$, which is also the characteristic triplet of the underlying L\'evy process $L_t = \Lambda(\mathbb{R}_{-}\times (0,t]) \text{ and } L_{-t} = \Lambda(\mathbb{R}_{-}\times (-t,0))$ for $t\in\bbr_+$ (see e.g. \cite{Sato1999} for the relevant background on infinitely divisible distributions and L\'evy processes). We call the triplet $(\gamma_0,\nu,\pi)$ the \emph{generating triplet}. Note that this means that $\gamma_0\geq 0$, $\nu(\bbr \backslash \bbr_+) = 0$ and $\int_{|x| \leq 1} |x| \nu(dx) < \infty$.

If $L$ is a pure jump L\'evy process with triplet $(0,0,\nu)$ and jump measure $N(ds,dx)$, then turning the Poisson point process of jumps in $\bbr\times \bbr_+\backslash\{0\}$ to one in $\bbr\times \bbr_+\backslash\{0\}\times \bbr_-$ by marking all jumps with independent marks distributed according to $\pi$ produces the jump measure of a L\'evy basis with triplet $(\gamma_0,\nu,\pi)$.

In the supOU process defined now this can be understood as marking every jump of a L\'evy process with an individual exponential decay rate. We restrict our attention to positive supOU processes as this is natural when using them to model a variance changing over time.

\begin{theorem}
Let $\Lambda$ be an $\bbr_+$-valued L\'{e}vy basis on $\bbr_- \times \bbr$ with generating triplet $(\gamma_0, \nu, \pi)$. Assume
 \begin{equation*}
  \int_{|x|>1} \ln(|x|) \nu(dx) < \infty,  \quad \text{and} \quad -\int_{\bbr_-} \frac{1}{A} \pi(dA) < \infty.
 \end{equation*}
 Then the process $\Sigma = (\Sigma_t)_{t \in \bbr}$ given by
 \begin{flalign*}
  \Sigma_t & = \int_{\bbr_-} \int_{-\infty}^t e^{A(t-s)} \Lambda(dA, ds) 
 \end{flalign*}
is well-defined as a Lebesgue integral for all $t \in \bbr$  and it is stationary.

Moreover, $\Sigma_t \geq 0$ for all $t \in \bbr$ and the distribution of $\Sigma_t$ is infinitely divisible with characteristic function given by
$
 \mathbb{E} \left( e^{ iu\Sigma_t} \right) = e^{ iu\gamma_{\Sigma, 0} + \int_{\bbr_+} \left( e^{iux} -1 \right) \nu_\Sigma (dx)}
$
for all $u \in \bbr$ where
$
 \gamma_{\Sigma, 0} = \int_{\bbr_-} \int_0^\infty e^{As} \gamma_0 ds \pi(dA),
\,\,\,
 \nu_\Sigma (B) = \int_{\bbr_-} \int_0^\infty \int_{\bbr_+} \mathbf{1}_B \left( e^{As}x \right) \nu(dx) ds \pi(dA)
$
for all $B \in \mathscr{B} (\bbr)$.
\end{theorem}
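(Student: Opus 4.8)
The plan is to exploit the Lévy--Itô structure of the positive Lévy basis together with the integration theory for infinitely divisible independently scattered random measures (Rajput--Rosinski). Since the generating triplet has $\gamma_0 \geq 0$, no Gaussian part, and $\nu$ supported on $\bbr_+$ with $\int_{|x|\le 1}|x|\,\nu(dx) < \infty$, the basis is of finite-variation subordinator type and can be written as $\Lambda(dA,ds) = \gamma_0\,\pi(dA)\,ds + \int_{\bbr_+} x\,N(dA,ds,dx)$, where $N$ is a Poisson random measure on $\bbr_-\times\bbr\times\bbr_+$ with intensity $\pi(dA)\,ds\,\nu(dx)$. Because the kernel $f(A,s) = e^{A(t-s)}\mathbf{1}_{\{s\le t\}}$ is nonnegative and $A<0$ forces $f\in(0,1]$, the integral defining $\Sigma_t$ is read pathwise as a Lebesgue integral of a nonnegative function against a (pathwise) nonnegative measure, so the bound $\Sigma_t\ge 0$ is then immediate once finiteness is shown.

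The crux is establishing finiteness. First I would dispatch the drift contribution by the elementary identity $\int_{-\infty}^t e^{A(t-s)}\,ds = -1/A$, so that the drift part equals $\gamma_0\bigl(-\int_{\bbr_-} A^{-1}\pi(dA)\bigr)$, finite by the second hypothesis. For the jump part I would invoke the criterion that, for a Poisson random measure of intensity $\mu$ and nonnegative integrand $g$, one has $\int g\,dN<\infty$ a.s. iff $\int \min(g,1)\,d\mu<\infty$. With $g(A,s,x)=e^{A(t-s)}x$, the substitution $r=t-s$ and Fubini reduce the inner time integral to the explicit expression \[ \int_0^\infty \min(e^{Ar}x,1)\,dr = \frac{1}{-A}\Bigl(x\,\mathbf{1}_{\{x\le 1\}} + (1+\ln x)\,\mathbf{1}_{\{x>1\}}\Bigr), \] after which the $x$-integral is finite precisely because $\int_{(0,1]}x\,\nu(dx)<\infty$ (finite variation) and $\int_{(1,\infty)}(1+\ln x)\,\nu(dx)<\infty$ (the log-moment hypothesis, using $\nu((1,\infty))<\infty$). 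Integrating the remaining factor $1/(-A)$ against $\pi$ is finite again by the second hypothesis. Both assumptions thus enter exactly here, and this is where I expect the real work to lie.

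Given finiteness, the characteristic function follows from the Rajput--Rosinski formula $\mathbb{E}\exp(iu\Sigma_t) = \exp\bigl(\int_{\bbr_-}\int_{\bbr}\varphi(u\,e^{A(t-s)}\mathbf{1}_{\{s\le t\}})\,ds\,\pi(dA)\bigr)$. Inserting $\varphi$ and separating drift and jump parts, the drift term reproduces $iu\gamma_{\Sigma,0}$ with $\gamma_{\Sigma,0}=\gamma_0\int_{\bbr_-}\int_0^\infty e^{As}\,ds\,\pi(dA)$, while the substitution $r=t-s$ identifies the jump term as $\int_{\bbr_+}(e^{iuy}-1)\,\nu_\Sigma(dy)$, where $\nu_\Sigma$ is exactly the image of $\pi(dA)\,ds\,\nu(dx)$ under $(A,s,x)\mapsto e^{As}x$ — the stated formula — by the change-of-variables formula for pushforward measures. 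The finiteness established above is precisely $\int_{\bbr_+}\min(y,1)\,\nu_\Sigma(dy)<\infty$, which confirms that $\nu_\Sigma$ is a genuine Lévy measure and hence that $\Sigma_t$ is infinitely divisible of subordinator type.

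Finally, stationarity would follow by writing the joint characteristic function of $(\Sigma_{t_1+h},\dots,\Sigma_{t_n+h})$ in the same exponential form, $\exp\bigl(\int_{\bbr_-}\int_{\bbr}\varphi(\sum_j u_j e^{A(t_j+h-s)}\mathbf{1}_{\{s\le t_j+h\}})\,ds\,\pi(dA)\bigr)$, and using the translation invariance of Lebesgue measure in the $s$-variable together with the fact that the kernel depends on $s$ only through $t_j-s$; the shift $s\mapsto s-h$ leaves the integral unchanged and yields equality of all finite-dimensional distributions with those of $(\Sigma_{t_1},\dots,\Sigma_{t_n})$.
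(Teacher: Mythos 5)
The paper does not prove this theorem at all: it is quoted as a known result in the review section, with the proof residing in the cited references (Barndorff-Nielsen 2001; Barndorff-Nielsen and Stelzer 2011, where it is established via the Rajput--Rosinski existence criteria for integrals with respect to infinitely divisible independently scattered random measures). Your argument is correct and is essentially the standard one specialised to the finite-variation, no-Gaussian-part case: the pathwise Poisson-integral criterion $\int\min(g,1)\,d\mu<\infty$ you use is equivalent here to the Rajput--Rosinski conditions, your explicit computation $\int_0^\infty\min(e^{Ar}x,1)\,dr=\frac{1}{-A}\bigl(x\mathbf{1}_{\{x\le1\}}+(1+\ln x)\mathbf{1}_{\{x>1\}}\bigr)$ is right and shows exactly how the log-moment and $-\int_{\bbr_-}A^{-1}\pi(dA)<\infty$ hypotheses enter, and the identification of $\nu_\Sigma$ as the pushforward of $\pi(dA)\,ds\,\nu(dx)$ under $(A,s,x)\mapsto e^{As}x$ together with the translation-invariance argument for stationarity is exactly what the literature does. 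No gaps.
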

As shown in \cite[Th. 3.12]{barndorff:stelzer:2011a} the supOU process is adapted to the filtration generated by $\Lambda$ and has locally bounded paths. Provided $\pi$ has a finite first moment, one can take a supOU process to have c\`adl\`ag paths.

\begin{definition}\label{supOUsv}
 Let $W$ be a standard Brownian motion, $a = (a_t)_{t \in \bbr_+}$ a predictable real-valued process, $\Lambda$ an $\bbr_+$-valued L\'{e}vy basis on $\bbr_- \times \bbr$ independent of $W$ with generating triplet $(\gamma_0, \nu, \pi)$ and let $L$ be its underlying L\'{e}vy process. Let $\Sigma$ be a non-negative c\`adl\`ag supOU process and $\rho \in \bbr$. Assume that $X = (X_t)_{t \in \bbr_+}$ is given by 
 \begin{equation*}
  X_t = X_0 + \int_0^t a_s ds + \int_0^t \Sigma_s^{\frac{1}{2}} dW_s + \rho (L_t-\gamma_0 t),
 \end{equation*}
 where $X_0$ is independent of $\Lambda$. Then we say that $X$ follows a univariate supOU stochastic volatility model and refer to it by $SVsupOU(a, \rho, \gamma_0, \nu, \pi)$.
\end{definition}

When we speak about properties related to a filtration above or in the following we refer to the filtration generated by $W$ and $\Lambda$.

Above $X$ is supposed to be the log price of some financial asset and $\rho$ is the typically negative correlation between jumps in the volatility and log asset prices modelling the leverage effect. To ensure that the absolutely continuous drift is completely given by $a_t$ we subtract the drift $\gamma_0$ from the L\'evy process noting that this can be done without loss of generality.

In \cite{BarndorffetStelzer2009sv} it has been shown that  the model is able to exhibit long-range dependence in the squared log-returns. The typical example leading to a polynomial decay of the autocovariance function of the squared returns and to long-range dependence for certain choices of the parameter  is to take $\pi$ as  a Gamma distribution mirrored at the origin. \cite{FasenetCklu2007,StelzerTosstorff2011} discuss in general which properties of $\pi$ result in long-range dependence.
\section{Martingale conditions}\label{sec:2}
Now we assume given a market with a deterministic numeraire (or bond) with price process $e^{rt}$ for some $r\geq 0$ and a risky asset with price process $S_t$. 

We want to model the market by a supOU stochastic volatility model under the risk neutral dynamics. Thus we need to understand when $\hat S_t=e^{-rt} e^{X_t}$ is a martingale for the filtration $\mathbb{G}=(\mathcal{G}_t)_{t\in\bbr_+}$ generated by the Wiener process and the L\'evy basis, i.e. $\mathcal{G}_t=\sigma\left( \left\{\Lambda(A),\, W_s: s\in [0,t] \mbox{ and  } A\in \mathscr{B}_b(\bbr_-\times (-\infty,t])\right\}\right)$ for $t\in\bbr_+$. Implicitly we understand the filtration is modified such that the usual hypotheses (see e.g. \cite{Protter2004}) are satisfied.
\begin{theorem}[Martingale condition] Consider a market as described above.
 Suppose that
 \begin{equation}
  \int_{ x>1 } \left( e^{\rho x} - 1 \right) \nu(dx) < \infty.
 \end{equation}
 If the process $a = (a_t)_{t \in \bbr_+}$ satisfies
 \begin{equation}\label{martingale_cond}
  a_t = r - \frac{1}{2} \Sigma_t - \int_{\bbr_+} \left( e^{\rho x} - 1 \right) \nu(dx),
 \end{equation}
then the discounted price process $\hat{S}$ is a martingale.
\end{theorem}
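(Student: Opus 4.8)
The plan is to establish the martingale property directly by a two-stage conditioning argument exploiting that $W$ is independent of $\Lambda$ and that $L$ has independent increments. Set $Y_t:=X_t-rt$, so that $\hat S_t=e^{Y_t}$. Since $\int_{|x|\le 1}|x|\,\nu(dx)<\infty$ and $\nu$ lives on $\bbr_+$, the underlying L\'evy process satisfies $L_t-\gamma_0 t=\int_0^t\int_{\bbr_+}x\,N(ds,dx)$, a driftless pure-jump process; hence the continuous local-martingale part of $Y$ is $\int_0^t\Sigma_s^{1/2}\,dW_s$ with quadratic variation $\int_0^t\Sigma_s\,ds$, and the jumps of $Y$ are $\rho\,\Delta L_s$. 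Applying It\^o's formula to $e^{Y_t}$ shows that the absolutely continuous drift of $\hat S$ is $\hat S_{t-}\bigl(a_t-r+\tfrac12\Sigma_t+\int_{\bbr_+}(e^{\rho x}-1)\,\nu(dx)\bigr)\,dt$, so that \equref{martingale_cond} is exactly the condition forcing it to vanish and $\hat S$ to be a local martingale. I would, however, prove the genuine martingale property directly, as this simultaneously removes the need to upgrade from a local to a true martingale.

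Fix $0\le u\le t$ and condition on $\mathcal H:=\mathcal G_u\vee\sigma\bigl(\Lambda(A):A\in\mathscr B_b(\bbr_-\times\bbr)\bigr)$, the field obtained by adjoining the whole L\'evy basis to $\mathcal G_u$. Given $\mathcal H$ the entire path of $\Sigma$ and the increment $L_t-L_u$ are determined, so the only surviving randomness in \[ Y_t-Y_u=\int_u^t(a_s-r)\,ds+\int_u^t\Sigma_s^{1/2}\,dW_s+\rho\bigl(L_t-L_u-\gamma_0(t-u)\bigr) \] is the Wiener integral $\int_u^t\Sigma_s^{1/2}\,dW_s$. Because $W\perp\Lambda$, conditionally on $\mathcal H$ this integral is centered Gaussian with variance $\int_u^t\Sigma_s\,ds$, whence $\mathbb E\bigl[e^{\int_u^t\Sigma_s^{1/2}dW_s}\mid\mathcal H\bigr]=e^{\frac12\int_u^t\Sigma_s\,ds}$. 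Inserting \equref{martingale_cond} for $a_s$, the two $\Sigma$-integrals cancel and \[ \mathbb E\bigl[e^{Y_t-Y_u}\mid\mathcal H\bigr]=\exp\!\Bigl(-(t-u)\,C_\rho+\rho\bigl(L_t-L_u-\gamma_0(t-u)\bigr)\Bigr),\qquad C_\rho:=\int_{\bbr_+}(e^{\rho x}-1)\,\nu(dx). \]

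Since $\mathcal G_u\subseteq\mathcal H$, the tower property reduces $\mathbb E[e^{Y_t-Y_u}\mid\mathcal G_u]$ to the $\mathcal G_u$-conditional expectation of the right-hand side above. The increment $L_t-L_u=\Lambda(\bbr_-\times(u,t])$ is independent of $\mathcal G_u$ and distributed as $L_{t-u}$, so only its exponential moment remains. Here the hypothesis $\int_{x>1}(e^{\rho x}-1)\,\nu(dx)<\infty$ enters: together with $\int_{|x|\le1}|x|\,\nu(dx)<\infty$ it renders $C_\rho$ absolutely convergent, and, as the cumulant transform $\varphi$ then extends analytically to the argument $-i\rho$, gives $\mathbb E[e^{\rho L_{t-u}}]=e^{(t-u)\varphi(-i\rho)}=\exp\bigl((t-u)(\rho\gamma_0+C_\rho)\bigr)$. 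Multiplying this by the factor $e^{-(t-u)C_\rho-\rho\gamma_0(t-u)}$ left over from the previous display yields exactly $1$, so $\mathbb E[\hat S_t\mid\mathcal G_u]=\hat S_u$; the same moment bound gives $\hat S_t\in L^1$ once $\mathbb E[e^{X_0}]<\infty$, completing the martingale property.

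The principal difficulty is not a single estimate but the entanglement introduced by the leverage effect: $\Sigma$ and $L$ are both built from $\Lambda$ and hence dependent, which rules out a naive factorization of $\mathbb E[e^{Y_t-Y_u}\mid\mathcal G_u]$ into a Gaussian and a jump piece. Conditioning on the full basis $\Lambda$ first is precisely the device that decouples them. The genuinely delicate analytic point is extracting $\mathbb E[e^{\rho L_{t-u}}]<\infty$ and its closed form solely from the stated hypotheses: one checks convergence of $C_\rho$ near the origin via $|e^{\rho x}-1|\le|\rho|e^{|\rho|}x$ on $(0,1]$ together with $\int_{|x|\le1}|x|\,\nu(dx)<\infty$, and in the tail via $\int_{x>1}(e^{\rho x}-1)\,\nu(dx)<\infty$ (which only bites for $\rho>0$, as $e^{\rho x}-1$ is bounded when $\rho\le0$), and then justifies passing from the characteristic function to the real Laplace transform by analytic continuation onto the relevant strip. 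Once these moments are secured, the measurability and conditional-Fubini steps underpinning the two conditionings are routine.
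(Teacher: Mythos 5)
Your proof is correct, and it is essentially the argument the paper has in mind: the paper gives no details but defers to the adaptation of \cite[Prop.~2.10]{MuhleKarbePfaffelStelzer2009} and \cite[Sec.~3]{NicolatoVenardos2003}, which proceed by exactly your two-stage conditioning (first on the driving L\'evy noise to reduce the Brownian part to a deterministic-variance Gaussian, then using independence and stationarity of the increments of $L$ together with the exponential-moment condition to evaluate the jump part). Your explicit treatment of why conditioning on the whole basis is needed to decouple $\Sigma$ from $L$ under leverage, and of the finiteness of $C_\rho$ from the two integrability hypotheses on $\nu$, fills in precisely the details the paper omits.
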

\begin{proof}
The arguments are straightforward adaptations of the ones in \cite[Prop. 2.10]{MuhleKarbePfaffelStelzer2009} or \cite[Sec. 3]{NicolatoVenardos2003}.
\end{proof}

\section{Fourier pricing in the supOU stochastic volatility model}\label{sec:3}
Our aim now is to use the Fourier pricing approach in the supOU stochastic volatility model for calculating prices of European derivatives.
\subsection{A review on Fourier pricing}
We start with a brief review on the well-known Fourier pricing techniques introduced in \cite{Raible2000,CarrMadan1999}.

Let the price process of a financial asset be modeled as an exponential semimartingale $S=(S_t)_{0 \leq t \leq T}$ i.e.
$
 S_t = S_0 e^{X_t}, \quad 0 \leq t \leq T
$
where $X = (X_t)_{0 \leq t \leq T}$ is a semimartingale.

 Let $r$ be the risk-free interest rate and let us assume that we are directly working under an equivalent martingale measure, i.e. the discounted price process $\hat{S} = (\hat{S}_t)_{0 \leq t \leq T}$ given by $\hat{S}_t = S_0 e^{X_t - rt}$ is a martingale.

We call the process $X$ the underlying process and without loss of generality we can assume that $X_0 = 0$. We denote by $s$ minus the logarithm of the initial value of $S$, i.e. $s = -\log(S_0)$.

Let $\hat{f}$ denote the Fourier transform of the function $f$, i.e. $
 \hat{f}(u) = \int_{\bbr} e^{iux} f(x) dx.
$

Let now $f : \bbr_+ \rightarrow \bbr$ be a measurable function that we refer to as the payoff function. Then, the arbitrage-free price of the derivative with payoff $f(X_T - s)$ and maturity $T$  at time zero is the conditional expected discounted  payoff under the chosen equivalent martingale measure, i.e.
$
 V_f(X_T;s) = e^{-rT} \mathbb{E} \left( f(X_T - s) |\mathcal{G}_0\right) .
$

The following theorem gives the valuation formula for the price of the derivative paying $f(X_T - s)$ at time $T$.
\begin{theorem}  \cite[Th. 2.2, Rem. 2.3]{EGP2010} \label{fourier1}
 Let $f: \bbr_+ \rightarrow \bbr$ be a payoff function and let $g_R(x) = e^{-Rx} f(x)$ for some $R \in \bbr$ denote the dampened payoff function. Define $\Phi_{X_T|\mathscr{G}_0} (u)  := \mathbb{E} \left( e^{uX_T} |\mathscr{G}_0 \right), \quad u \in \bbc$. If
  \[ (i) \quad g_R \in L^1(\bbr)\cap L^\infty(\bbr),\quad (ii) \quad \Phi_{X_T|\mathscr{G}_0} (R)<\infty,\quad (iii)\quad \Phi_{X_T|\mathscr{G}_0}(R+i\cdot)\in L^1(\bbr),\]
 then $
  V_f(X_T;s) = \frac{e^{-rT - Rs}}{2 \pi} \int_\bbr e^{-ius} \Phi_{X_T|\mathscr{G}_0} (R + iu) \hat{f}(iR - u) du.
$
\end{theorem}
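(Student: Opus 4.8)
The plan is to start from the definition $V_f(X_T;s)=e^{-rT}\mathbb{E}(f(X_T-s)\mid\mathscr{G}_0)$ and to reduce the whole statement to a Fourier inversion for the conditional law of $X_T$. First I would pass to the regular conditional distribution $\mu(\cdot)=\mathbb{P}(X_T\in\cdot\mid\mathscr{G}_0)$ (which exists since we may work on a standard space), so that $\mathbb{E}(f(X_T-s)\mid\mathscr{G}_0)=\int_\bbr f(x-s)\,\mu(dx)$ holds $\mathbb{P}$-a.s. Writing the payoff through the dampening $f(y)=e^{Ry}g_R(y)$ and pulling out $e^{-Rs}$ gives $\int_\bbr f(x-s)\mu(dx)=e^{-Rs}\int_\bbr g_R(x-s)\,\tilde\mu(dx)$, where $\tilde\mu(dx):=e^{Rx}\mu(dx)$. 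This exponentially tilted measure is finite precisely because $\tilde\mu(\bbr)=\Phi_{X_T|\mathscr{G}_0}(R)<\infty$, which is assumption (ii); combined with the boundedness of $g_R$ from (i), this also guarantees that the original conditional expectation is finite in the first place, since $\int_\bbr|f(x-s)|\mu(dx)\le\|g_R\|_{L^\infty}\,\Phi_{X_T|\mathscr{G}_0}(R)$.

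Second, I would exploit assumption (iii). Since $\int_\bbr e^{iux}\tilde\mu(dx)=\Phi_{X_T|\mathscr{G}_0}(R+iu)$, condition (iii) states that the characteristic function of the finite measure $\tilde\mu$ is integrable, so by the standard Fourier inversion theorem for finite measures $\tilde\mu$ admits a bounded continuous Lebesgue density $p(x)=\frac{1}{2\pi}\int_\bbr e^{-iux}\Phi_{X_T|\mathscr{G}_0}(R+iu)\,du$. Substituting this representation and interchanging the two integrations yields $\int_\bbr g_R(x-s)\tilde\mu(dx)=\frac{1}{2\pi}\int_\bbr\Phi_{X_T|\mathscr{G}_0}(R+iu)\big(\int_\bbr e^{-iux}g_R(x-s)\,dx\big)\,du$. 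The inner integral computes to $e^{-ius}\hat{g}_R(-u)$, and the shift identity $\hat{g}_R(u)=\int_\bbr e^{iux}e^{-Rx}f(x)\,dx=\hat f(u+iR)$ gives $\hat{g}_R(-u)=\hat f(iR-u)$, so that directly (with no further change of variables) $\int_\bbr g_R(x-s)\tilde\mu(dx)=\frac{1}{2\pi}\int_\bbr e^{-ius}\Phi_{X_T|\mathscr{G}_0}(R+iu)\hat f(iR-u)\,du$. Restoring the factors $e^{-Rs}$ and $e^{-rT}$ then produces exactly the claimed formula.

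I expect the one genuinely delicate point to be the justification of the interchange of integrals, and this is where the three hypotheses combine. The Fubini step is licensed because $\int_\bbr\int_\bbr|g_R(x-s)|\,|\Phi_{X_T|\mathscr{G}_0}(R+iu)|\,du\,dx=\|g_R\|_{L^1}\,\|\Phi_{X_T|\mathscr{G}_0}(R+i\cdot)\|_{L^1}<\infty$ by the $L^1$ part of (i) together with (iii); the $L^\infty$ part of (i) keeps $\int_\bbr g_R(x-s)p(x)\,dx$ absolutely convergent, while (iii) also ensures that the final right-hand integral converges absolutely, since $\hat f(iR-\cdot)=\hat{g}_R(-\cdot)$ is bounded by $\|g_R\|_{L^1}$. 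In short, the analytic care lies in checking that every step is dominated and that the whole computation is carried out at the level of the regular conditional distribution so that the identity holds $\mathbb{P}$-a.s.; the algebraic manipulations (dampening, the inner Fourier integral, and the shift in $\hat f$) are routine.
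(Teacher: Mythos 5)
Your argument is correct. Note, however, that the paper itself gives no proof of this statement: it is quoted verbatim from \cite[Th.~2.2, Rem.~2.3]{EGP2010}, so there is nothing internal to compare against. Your route --- exponentially tilting the regular conditional law $\mu$ to $\tilde\mu(dx)=e^{Rx}\mu(dx)$, observing that (ii) makes $\tilde\mu$ finite and (iii) makes its characteristic function integrable so that Fourier inversion yields a bounded continuous density, and then applying Fubini using $\|g_R\|_{L^1}\|\Phi_{X_T|\mathscr{G}_0}(R+i\cdot)\|_{L^1}<\infty$ --- is precisely the mechanism behind Remark~2.3 of that reference (the alternative being to invert the dampened payoff $g_R$ itself when $\hat g_R\in L^1$, which is what their Theorem~2.2 does). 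The only blemish is cosmetic: your domination bound for the conditional expectation should read $\int_{\bbr}|f(x-s)|\,\mu(dx)\le e^{-Rs}\|g_R\|_{L^\infty}\Phi_{X_T|\mathscr{G}_0}(R)$, i.e.\ you dropped the harmless factor $e^{-Rs}$; this does not affect the conclusion.
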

It is well known that for a European Call  option with maturity $T$ and strike $K>0$ condition $(i)$ is satisfied for $R>1$ and that for the payoff function 
  $
    f(x) = \max(e^x - K, 0) =:  (e^x - K)^+
$
   the Fourier transform is 
  $
    \hat{f}(u) = \frac{K^{1 + iu}}{iu (1 + iu)}
 $ for $u \in \bbc$ with $\operatorname{Im}(u) \in (1, \infty)$.

In the following we calculate the characteristic/moment generating function for the supOU SV model and show conditions when the above Fourier pricing techniques are applicable.
\subsection{The characteristic function}
Consider the general supOU SV model with drift  of the form $a_t=\mu+\gamma_0+\beta \Sigma_t$. Note that then the discounted stock price is a martingale if and only if $\beta=-1/2$ and $\mu+\gamma_0=r-\int_{\bbr_+} \left( e^{\rho x} - 1 \right) \nu(dx)$. 

Standard calculations as in \cite[Th. 2.5]{MuhleKarbePfaffelStelzer2009} or \cite{NicolatoVenardos2003} give the following result which is the univariate special case of a formula reported in \cite[Sec. 5.2]{barndorff:stelzer:2011a}.
\begin{theorem}\label{theorem_ch_fn}
 Let $X_0 \in \bbr$ and let the log-price process $X$ follow a supOU SV model of the above form. Then, for every $t \in \bbr_+$ and for all $u \in \bbr$ the characteristic function of $X_t$ given $\mathscr{G}_0$ is given by
 \begin{align}\label{chfn}
  &\Phi_{X_t|\mathscr{G}_0} (iu)=\mathbb{E} \left( e^{iuX_t} | \mathscr{G}_0 \right)=\\&\quad = \exp \Bigg\{ i \Bigg( u(X_0 + \mu t) 
  +  \left( u \beta + \frac{i}{2} u^2 \right) \int_{\bbr_-} \int_{-\infty}^0 \frac{1}{A} \left( e^{A(t-s)} - e^{-As} \right) \Lambda(dA, ds) \Bigg) \nonumber\\
 &\quad\quad +  \int_{\bbr_-}  \int_0^t \varphi \left( \frac{e^{A(t-s)}}{A} \left( u \beta + \frac{i}{2} u^2 \right) - \left( \frac{1}{A} \left( u \beta + \frac{i}{2} u^2 \right) - \rho u \right) \right) ds \pi(dA) \Bigg\}.\nonumber
 \end{align}
\end{theorem}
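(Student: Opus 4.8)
The plan is to compute the conditional characteristic function in two stages: first integrate out the Brownian motion conditionally on the volatility path, then integrate out the Lévy basis using its independently-scattered structure. Substituting the drift $a_s = \mu + \gamma_0 + \beta\Sigma_s$ gives
\[
X_t = X_0 + (\mu+\gamma_0)t + \beta\int_0^t\Sigma_s\,ds + \int_0^t\Sigma_s^{1/2}\,dW_s + \rho\,(L_t - \gamma_0 t).
\]
First I would condition on $\sigma(\Lambda)$ (together with $\mathscr{G}_0$). Since $W$ is independent of $\Lambda$ and $\Sigma$ is $\sigma(\Lambda)$-measurable, the integral $\int_0^t\Sigma_s^{1/2}\,dW_s$ is, conditionally, a centred Gaussian variable with variance $\int_0^t\Sigma_s\,ds$ (finite a.s.\ by the local boundedness of the paths of $\Sigma$ recalled after the first theorem). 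Taking the conditional expectation over $W$ therefore replaces the Gaussian part by $-\tfrac{u^2}{2}\int_0^t\Sigma_s\,ds$, and combining this with the drift term $iu\beta\int_0^t\Sigma_s\,ds$ produces the coefficient $iu\beta - \tfrac{u^2}{2} = i\bigl(u\beta + \tfrac{i}{2}u^2\bigr)$ multiplying $\int_0^t\Sigma_s\,ds$; this already explains the recurrence of that factor throughout \eqref{chfn}.

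The second step is to rewrite $\int_0^t\Sigma_s\,ds$ as a single integral against $\Lambda$. Inserting the definition of $\Sigma$ and applying a stochastic Fubini theorem, I would interchange the $ds$- and $\Lambda$-integrations and evaluate the elementary inner integral $\int_{\max(r,0)}^{t} e^{A(s-r)}\,ds$. This naturally splits the domain $\bbr_-\times(-\infty,t]$ into the past $\bbr_-\times(-\infty,0]$, where the kernel is $\tfrac1A\bigl(e^{A(t-r)}-e^{-Ar}\bigr)$ and the integral is $\mathscr{G}_0$-measurable, giving exactly the explicit $\mathscr{G}_0$-integral in \eqref{chfn}, and the interval $\bbr_-\times(0,t]$, where the kernel is $\tfrac1A\bigl(e^{A(t-r)}-1\bigr)$. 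Using $L_t = \Lambda(\bbr_-\times(0,t])$, the leverage term is likewise an integral over $\bbr_-\times(0,t]$, so the entire $(0,t]$-contribution to the exponent reads $\int_{\bbr_-}\int_0^t g(A,s)\,\Lambda(dA,ds)$ with $g(A,s) = i\bigl(u\beta+\tfrac{i}{2}u^2\bigr)\tfrac1A\bigl(e^{A(t-s)}-1\bigr) + iu\rho$, plus the deterministic compensator $-iu\rho\gamma_0 t$.

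The third step is to take the conditional expectation given $\mathscr{G}_0$. The past integral factors out, while the $(0,t]$ integral is independent of $\mathscr{G}_0$, so I would invoke the exponential formula for integrals against a Lévy basis, $\mathbb{E}\exp\!\bigl(\int g\,d\Lambda\bigr) = \exp\!\bigl(\int \varphi(-ig)\,d\Pi\bigr)$, with $\Pi = \pi\times\lambda$. A one-line computation shows that $-ig(A,s) = \bigl(u\beta+\tfrac i2 u^2\bigr)\tfrac1A\bigl(e^{A(t-s)}-1\bigr) + \rho u$ is precisely the argument of $\varphi$ in \eqref{chfn}, and the drift part $iw\gamma_0$ of $\varphi$ reproduces the leverage compensator. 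The deterministic terms — the $(\mu+\gamma_0)t$ from $a$, the $-\rho\gamma_0 t$ from the leverage term, and the $\gamma_0$-drift contributed through $\varphi$ — must then be collected; this bookkeeping of the interplay between the $+\gamma_0$ in $a_t$ and the compensator in $\rho(L_t-\gamma_0 t)$ is the one place where care is genuinely needed to land on the deterministic part of \eqref{chfn}.

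I do not expect the algebra to be the real difficulty, as indicated by the reference to the standard calculations of \cite[Th.~2.5]{MuhleKarbePfaffelStelzer2009}; the substance lies in the analytic justifications. For \emph{real} $u$ the continuation of $\varphi$ is in fact benign: since $A<0$ gives $\tfrac1A(e^{A(t-s)}-1)\ge 0$, one checks that the real part of $g$ is nonpositive, so $|e^{gx}|\le 1$ on $\bbr_+$ and the $\nu$-integral inside $\varphi(-ig)$ converges unconditionally. The exponential-moment hypothesis $\int_{x>1}(e^{\rho x}-1)\,\nu(dx)<\infty$ is therefore what will later permit extension to complex arguments (the moment generating function needed for Fourier pricing), rather than being required here. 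The genuine technical points are thus (i) the rigorous stochastic Fubini interchange yielding the single-integral representation of $\int_0^t\Sigma_s\,ds$, (ii) the precise version of the Lévy-basis exponential formula with complex integrand and the passage of the expectation inside the $\Lambda$-integral, justified via the independently-scattered structure, and (iii) convergence of the $\pi$-integral, which rests on the supOU existence conditions $\int_{|x|>1}\ln|x|\,\nu(dx)<\infty$ and $-\int_{\bbr_-}\tfrac1A\,\pi(dA)<\infty$. Granting these, the claim is the univariate specialisation of \cite[Sec.~5.2]{barndorff:stelzer:2011a}.
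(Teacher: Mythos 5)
Your two-stage conditioning argument --- integrate out $W$ given the volatility path, rewrite $\int_0^t\Sigma_s\,ds$ via a stochastic Fubini argument as an integral against $\Lambda$ split over $\bbr_-\times(-\infty,0]$ and $\bbr_-\times(0,t]$, and then apply the exponential formula for the L\'evy basis --- is precisely the ``standard calculation'' that the paper delegates to \cite[Th.~2.5]{MuhleKarbePfaffelStelzer2009} and \cite{NicolatoVenardos2003}, so in structure your proof and the paper's agree. The analytic points you single out (the Fubini interchange, the exponential formula for deterministic complex integrands, boundedness of $|e^{ih(A,s)x}|$ for $x>0$ because $\operatorname{Im} h\geq 0$, and convergence of the $\pi$-integral via the uniform bound $|(e^{A(t-s)}-1)/A|\leq t$) are all identified correctly.

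However, the one step you explicitly defer --- collecting the deterministic $\gamma_0$-terms --- does not close in the way you assert. Carrying out your own bookkeeping: the drift $a_s=\mu+\gamma_0+\beta\Sigma_s$ contributes $iu(\mu+\gamma_0)t$; the compensator in $\rho(L_t-\gamma_0 t)$ contributes $-iu\rho\gamma_0 t$; and the linear part $i\gamma_0 h$ of $\varphi$, evaluated at $h(A,s)=\left(u\beta+\tfrac{i}{2}u^2\right)\tfrac{e^{A(t-s)}-1}{A}+\rho u$ and integrated over $[0,t]\times\pi$, returns exactly $+iu\rho\gamma_0 t$ (the remainder staying inside the displayed $\varphi$-integral). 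The net constant drift in the exponent is therefore $iu(\mu+\gamma_0)t$, whereas \eqref{chfn} displays $iu\mu t$: a residual $iu\gamma_0(1-\rho)t$ is left over. In other words, \eqref{chfn} is the conditional characteristic function of the model whose constant drift is $\mu+\rho\gamma_0$ (equivalently, of the model with uncompensated $+\rho L_t$ and constant drift $\mu$), not of the one with $a_t=\mu+\gamma_0+\beta\Sigma_t$ as announced just before the theorem. This is a discrepancy between the paper's drift convention and its formula rather than a defect of your strategy --- and it is immaterial in the paper's application, since $\gamma_0$ calibrates to essentially zero --- but a complete proof must either record the extra factor $e^{iu\gamma_0(1-\rho)t}$ or adjust the drift convention; you cannot simply assert that the deterministic terms ``land on'' the stated formula.
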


Note that in contrast to the case of the OU type stochastic volatility model, where $(X, \Sigma)$ is a strong Markov process, in the supOU stochastic volatility model $\Sigma$ is  not Markovian. Thus, conditioning on $X_0$ and $\Sigma_0$ is not equivalent to conditioning upon $\mathscr{G}_0$. Therefore $\Phi_{X_t|\mathscr{G}_0} (iu)$ is not simply a function of $X_0,\Sigma_0$. Instead, the whole past of the L\'evy basis enters via the $\mathcal{G}_0$-measurable 
\begin{equation*}
  z_t := \int_{\bbr_-} \int_{-\infty}^0 \frac{1}{A} \left( e^{A(t-s)} - e^{-As} \right) \Lambda(dA, ds),
\end{equation*}
which has a similar role as the initial volatility $\Sigma_0$ in the OU type stochastic volatility model. 
Like $\Sigma_0$ in the OU type models, $z_t$ can be treated as an additional parameter to be determined when calibrating the model to market option prices. We can immediately see that thus the number of parameters to be estimated increases with each additional maturity. As it will become clear later, the following observation is important.
\begin{lemma}\label{lem:z}
  $z_{t_1} \leq z_{t_2}$, for all $t_1,t_2 \in \bbr_+$ such that $t_1 \leq t_2$.
\end{lemma}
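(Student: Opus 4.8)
The plan is to work directly with the difference $z_{t_2}-z_{t_1}$ and show that its integrand is nonnegative pointwise, so that positivity of the L\'evy basis forces the whole integral to be nonnegative. Since both $z_{t_1}$ and $z_{t_2}$ are well-defined (they appear in the characteristic function \eqref{chfn}, which is assumed to exist), the linearity of the integral with respect to $\Lambda$ gives
\begin{equation*}
  z_{t_2} - z_{t_1} = \int_{\bbr_-} \int_{-\infty}^0 \frac{1}{A} \left( e^{A(t_2-s)} - e^{A(t_1-s)} \right) \Lambda(dA, ds).
\end{equation*}
The entire argument then reduces to analysing the sign of the deterministic integrand $h(A,s) := \frac{1}{A}(e^{A(t_2-s)} - e^{A(t_1-s)})$.

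First I would record the relevant sign constraints: on the domain of integration we have $A \in \bbr_-$, hence $A<0$ and $\frac{1}{A}<0$, while $s<0$ and $t_1,t_2 \in \bbr_+$. Because $t_1 \leq t_2$ and multiplication by the negative number $A$ reverses the inequality, $A(t_2-s) \leq A(t_1-s)$, so by monotonicity of the exponential $e^{A(t_2-s)} - e^{A(t_1-s)} \leq 0$. Multiplying this nonpositive quantity by the negative factor $\frac{1}{A}$ yields $h(A,s) \geq 0$ for every $(A,s)$ in the domain. (Equivalently, one checks $\partial_t \big[\frac{1}{A}(e^{A(t-s)}-e^{-As})\big] = e^{A(t-s)} > 0$, so the integrand defining $z_t$ is nondecreasing in $t$ for each fixed $(A,s)$, which says the same thing.)

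It then remains to transfer this pointwise nonnegativity through the integral against $\Lambda$. This is the step that requires the most care, since the integral is taken with respect to a random measure rather than an ordinary positive measure. The key structural fact is that $\Lambda$ is $\bbr_+$-valued, i.e. $\Lambda(B) \geq 0$ almost surely for every $B \in \mathcal{B}_b(\bbr_-\times\bbr)$. For a simple nonnegative integrand $\sum_i c_i \mathbf{1}_{B_i}$ with $c_i \geq 0$ and disjoint $B_i$, the integral equals $\sum_i c_i \Lambda(B_i) \geq 0$ a.s.; approximating the nonnegative integrand $h$ by such simple functions in the sense in which the integral with respect to a L\'evy basis is defined (cf. \cite{barndorff:stelzer:2011a}), nonnegativity is preserved in the limit. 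Hence $z_{t_2}-z_{t_1} \geq 0$ almost surely, which is the claim.

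I expect the main obstacle to be the last paragraph, namely justifying rigorously that integration against the positive L\'evy basis is order-preserving, since one must invoke the precise definition of the integral and the measurability of the limiting procedure rather than merely treating $\Lambda$ as a positive measure. The sign analysis of $h$ itself is elementary once the constraints $A<0$ and $s<0$ are made explicit.
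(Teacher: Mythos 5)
Your proof is correct and follows essentially the same route as the paper: you establish that the integrand is pointwise nondecreasing in $t$ on the domain $A<0$, $s\leq 0$ (the paper does this via the factorization $\frac{e^{-As}}{A}(e^{At}-1)$, you do it by comparing $e^{A(t_2-s)}$ with $e^{A(t_1-s)}$ directly, which is the same elementary computation), and then conclude via positivity of the $\bbr_+$-valued L\'evy basis. Your extra paragraph justifying that integration against the positive random measure preserves order is a point the paper leaves implicit, so no gap there.
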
 
\begin{proof}For $t \in \bbr_+$ and $s \leq t$ we have 
  $
    \frac{1}{A} \left( e^{A(t-s)} - e^{-As} \right) = \frac{e^{-As}}{A} \left( e^{At} - 1 \right)
$
  and for $t_1 \leq t_2$
  one sees $
    e^{A t_2} - 1 \leq e^{A t_1} - 1 \leq 0
  $
  since $A < 0$. This implies that for $s \leq t_1 \leq t_2$
 $
  \frac{e^{-As}}{A} \left( e^{A t_1} - 1 \right) \leq \frac{e^{-As}}{A} \left( e^{A t_2} - 1 \right)
$
  and thus $z_{t_1} \leq z_{t_2}$.
\end{proof}

\subsection{Regularity of the moment generating function}
In order to apply Fourier pricing we now show where the moment generating function $\Phi_{X_T|\mathcal{G}_0}$ is analytic.

Let $\theta_L(u) = \gamma_0 u + \int_{\bbr_+} \left( e^{ux} - 1 \right) \nu(dx)$ be the cumulant transform of the L\'evy basis (or rather its underlying subordinator). 
If
 $
  \int_{x \geq 1 } e^{rx} \nu(dx) < \infty \quad \text{for all } r \in \bbr \text{ such that } r < \epsilon
 $
  for some $\epsilon > 0$, then the function $\theta_L$ is analytic in the open set
  $
   S_L := \{ z \in \bbc: \quad \operatorname{Re}(z) < \epsilon \},
  $ as can be seen e.g. from the arguments at the start of the proof of \cite[Lemma 2.7]{MuhleKarbePfaffelStelzer2009}.
\begin{theorem}\label{strip}
  Let the measure $\nu$ satisfy
  \begin{equation}\label{condition}
  \int_{x \geq 1 } e^{rx} \nu(dx) < \infty \quad \text{for all } r \in \bbr \text{ such that } r < \epsilon
  \end{equation}
   for some $\epsilon > 0$. Then the function
   $
    \Theta(u) =  \int_{\bbr_-} \int_0^t \theta_L (u f_u(A,s) ) ds \pi(dA)
  $
    is analytic on the open strip
\begin{equation}
    S := \{ u \in \bbc, \quad |\operatorname{Re}(u)| < \delta \}
  \mbox{ with }
    \delta := -| \beta | - \frac{|\rho|}{t} + \sqrt{\Delta},\label{delta}
\end{equation}
   where
  $
    \Delta := \left( |\beta| + \frac{|\rho|}{t} \right)^2 +  \frac{2 \epsilon}{t}.
 $
 \end{theorem}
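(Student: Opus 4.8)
The plan is to write $\Theta$ as an integral, over the finite measure $\pi(dA)\,ds$ on $\bbr_-\times[0,t]$, of a family of functions of $u$ that are each analytic on $S$, and then to transfer analyticity through the integral by a Morera/Fubini argument. First I would recall the explicit form of the argument of $\theta_L$ coming from the exponent in \eqref{chfn}: passing from the characteristic function of Theorem~\ref{theorem_ch_fn} to the moment generating function (replacing $iu$ by the complex MGF–variable and using $\varphi(\cdot)=\theta_L(i\,\cdot)$), one is led to the term $\int_{\bbr_-}\int_0^t\theta_L\!\big(u f_u(A,s)\big)\,ds\,\pi(dA)$ with
\[
u f_u(A,s)=\frac{e^{A(t-s)}-1}{A}\Big(u\beta+\tfrac12 u^2\Big)+\rho u.
\]
Under \eqref{condition} the map $\theta_L$ is analytic on $S_L=\{\operatorname{Re}(z)<\epsilon\}$, and for each fixed $(A,s)$ the map $u\mapsto u f_u(A,s)$ is an entire (indeed quadratic) function of $u$; hence the composition $u\mapsto\theta_L(u f_u(A,s))$ is analytic on $\{u:\operatorname{Re}(u f_u(A,s))<\epsilon\}$. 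The crux is therefore to show that $S$ is contained in this set simultaneously for all $A<0$, $s\in[0,t]$.

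The heart of the argument is the sign analysis that produces $\delta$. Writing $u=x+iy$ and $h:=\frac{e^{A(t-s)}-1}{A}$, a short computation gives
\[
\operatorname{Re}\!\big(u f_u(A,s)\big)=h\Big(\beta x+\tfrac12 x^2-\tfrac12 y^2\Big)+\rho x.
\]
The two structural facts I would exploit are that $0\le h\le t$ for $A<0$ and $s\in[0,t]$ (which follows from the elementary inequality $1+x\le e^x$ applied to $x=A(t-s)\le 0$), and that the term $-\tfrac12 h y^2\le 0$, so that the supremum over $y$ is attained at $y=0$. Bounding the remaining expression by $\tfrac{t}{2}x^2+(t|\beta|+|\rho|)|x|$ and demanding that it be $<\epsilon$ yields exactly $|x|<\delta$, where $\delta$ is the positive root of $\tfrac{t}{2}\delta^2+(t|\beta|+|\rho|)\delta-\epsilon=0$; solving this quadratic reproduces the value in \eqref{delta}. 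Consequently, on $S$ every integrand $u\mapsto\theta_L(u f_u(A,s))$ is analytic.

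To conclude that $\Theta$ itself is analytic on $S$, I would fix an arbitrary compact $K\subset S$ and a triangular contour $C\subset S$ and verify the domination needed to interchange integration and contour integration. On $K$ one has $\sup_{K\times\bbr_-\times[0,t]}\operatorname{Re}(u f_u)\le\epsilon'<\epsilon$, and $u f_u(A,s)$ stays in a bounded set (again because $0\le h\le t$), so $\theta_L(u f_u(A,s))$ is uniformly bounded there; since $\pi(dA)\,ds$ has finite total mass $t$ on $\bbr_-\times[0,t]$, the integrand is dominated and Fubini applies. Then $\oint_{C}\Theta(u)\,du=\int_{\bbr_-}\int_0^t\big(\oint_{C}\theta_L(u f_u(A,s))\,du\big)\,ds\,\pi(dA)=0$ by Cauchy's theorem applied to each analytic integrand, and continuity of $\Theta$ (by dominated convergence) together with Morera's theorem gives analyticity of $\Theta$ on $S$. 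I expect the main obstacle to be this final step, namely making the domination uniform on compacta so that the interchange of integration and the contour integral is legitimate; by contrast, the determination of $\delta$ is a routine optimisation once the bound $0\le h\le t$ and the nonpositivity of $-\tfrac12 h y^2$ have been noted.
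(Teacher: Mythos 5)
Your proposal is correct, and the quantitative core coincides exactly with the paper's: the same bound $0\le \frac{e^{A(t-s)}-1}{A}\le t$ uniformly in $A<0$ and $s\in[0,t]$, the same observation that the imaginary part of $u$ only contributes the nonpositive term $-\tfrac12 h\,(\operatorname{Im}u)^2$ to $\operatorname{Re}(uf_u(A,s))$, and the same quadratic $\tfrac t2\delta^2+(t|\beta|+|\rho|)\delta-\epsilon=0$ whose positive root is \eqref{delta}. Where you differ is in the transfer of analyticity through the integral. The paper proceeds in two stages: it invokes a holomorphy theorem for parameter-dependent integrals to handle $\int_0^t\theta_L(uf_u(A,s))\,ds$ for fixed $A$, and then applies Mattner's theorem on complex differentiation under the integral to the outer $\pi$-integral, which forces it to verify local boundedness of $\int_{\bbr_-}|\varphi(u,A)|\,\pi(dA)$ via an explicit three-term decomposition of $|\theta_L(uf_u(A,s))|$ into bounds $B_1,B_2,B_{3,n}$ independent of $A$ and $s$. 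You instead run Morera plus Fubini directly on the double integral, getting the required domination softly: for $u$ in a compact subset of $S$, the values $uf_u(A,s)$ range over a bounded set whose closure lies in $\{\operatorname{Re}z<\epsilon\}$, so $\theta_L\circ(uf_u)$ is uniformly bounded by continuity of $\theta_L$ on its domain of analyticity, and $\pi(dA)\,ds$ has finite mass. This is essentially an inlined proof of the cited differentiation-under-the-integral result; it is shorter and avoids the explicit Taylor-remainder estimates of the paper, at the cost of being less quantitative (the paper's explicit bounds $B_1,B_2,B_{3,n}$ are reusable, e.g.\ in the integrability argument behind Theorem \ref{strip3}). Both routes are valid; no gap.
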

The rough idea of the proof is similar to \cite[Th. 2.8]{MuhleKarbePfaffelStelzer2009}, but the fact that we now integrate over the mean reversion parameter adds significant difficulty, as now bounds independent of the mean reversion parameter need to be obtained and a very general holomorphicity result for integrals has to be employed.
 \begin{proof} Define \begin{equation}\label{eq:f}
   f_u(A,s) = \mathbf{1}_{[0, t]} (s) \left( \frac{e^{A(t-s)}}{A} \left( \beta + \frac{u}{2} \right) - \left( \frac{1}{A} \left( \beta + \frac{u}{2} \right) - \rho \right) \right).
   \end{equation}
  We first determine $\delta > 0$ such that for all $u \in \bbr$ with $|u| < \delta$ it holds that $|u f_u(A, s)| < \epsilon$. We have
  \begin{flalign}\label{bound2}
   | u f_u(A, s)| & \leq \left| \frac{e^{A(t-s)} - 1}{A} \right| \left( |\beta| |u| + \frac{u^2}{2} \right) + |\rho| |u|
  \end{flalign}
  by the triangle inequality. In order to find the upper bound for the latter term, we first note that elementary analysis shows 
  \begin{equation}\label{newbound}
   \left| \frac{e^{A(t-s)} - 1}{A} \right| \leq t
  \end{equation}for all $A < 0$ and $s \in [0, t]$.
 Thus, we have to find $\delta > 0$ such that 
 $
   | u f_u(A, s) | \leq t \left( |\beta| |u| + \frac{u^2}{2} \right) + |\rho| |u| < \epsilon,
 $  for all $u \in \bbr$ with $|u| < \delta$,
  i.e. to find the solutions of the quadratic equation
  \begin{equation}\label{equation}
   \frac{t}{2} u^2 + \left( t |\beta| + |\rho| \right) |u| - \epsilon = 0.
  \end{equation}
 Since for $u = 0$ the sign of (\ref{equation}) is negative, i.e. (\ref{equation}) is equal to $-\epsilon$, we know that there exist one positive and one negative solution. The positive one is $\delta$ as given in (\ref{delta}).

Now let $u \in S$, i.e. $u = v + iw$ with $v, w \in \bbr$, $|v| < \delta$. Observe that
  $ 
   \operatorname{Re} ( u f_u(A,s))   = v f_v(A,s) - \frac{w^2}{2} \left( \frac{e^{A(t-s)}-1}{A}\right)
$
  and $\frac{e^{A(t-s)}-1}{A} \geq 0$ for all $s \in [0, t]$ and $A < 0$. Hence, $\operatorname{Re} ( u f_u(A,s)) \leq v f_v(A,s)$. This implies that
\[
   \int_{ x \geq 1  } e^{\operatorname{Re} ( u f_u(A,s)) x} \nu(dx) \leq \int_{x \geq 1  } e^{ v f_v(A,s) x} \nu(dx) < \infty
\]
  due to $ |v f_v(A,s)| < \epsilon $ for $|v| < \delta$ and condition (\ref{condition}). Hence for $u \in S$ the function
$
   \theta_L( u f_u(A, s)) =  \gamma_0 uf_u(A, s)  + \int_{\bbr_+} \left( e^{u f_u(A,s) x} -1 \right) \nu(dx)
$
  is well-defined. $u f_u(A,s)$ is a polynomial of $u$ and thus it is an analytic function in $\bbc$ for all $s \in [0,t]$ and $A < 0$. The function $\theta_L$ is analytic in the set
  $
  S_L = \{ z \in \bbc: \quad |\operatorname{Re}(z)| < \epsilon \}.
  $

  Thus,  the function $ \theta_L( u f_u(A, s))$ is analytic in $S$ for all $s \in [0,t]$ and $A < 0$.
  By the holomorphicity theorem for parameter dependent integrals (see e.g. \cite{KONIG}) we can conclude that
 $
   \int_0^t \theta_L (u f_u(A,s)) ds
$
  is analytic in $S$ for all $A < 0 $.

Defining
$
   \varphi (u, A) := \int_0^t \theta_L (u f_u(A,s)) ds
$
 we now apply \cite{MATT2001} to prove that
$
   \Theta(u) = \int_{\bbr_-} \int_0^t \theta_L (u f_u(A,s)) ds \pi (dA) = \int_{\bbr_-} \varphi(u, A) \pi(dA)
$
  is analytic in $S$. Its conditions $A_1$ and $A_2$ are obviously satisfied. It remains to prove that condition $A_3$ holds, i.e. that $\int_{\bbr_-}  \left| \varphi(u, A) \right| \pi(dA)$ is locally bounded. First observe that
  \begin{eqnarray}\label{bound}
   \left| \theta_L (u f_u(A,s)) \right| 
   & \leq &  \left| \gamma_0 u  f_u(A,s) \right| + \int_{ x \leq 1 } \left| e^{u f_u(A,s) x} -1 \right| \nu(dx) \nonumber \\
    && \quad \quad  + \int_{ x > 1 } \left| e^{u f_u(A,s) x} -1 \right| \nu(dx).
  \end{eqnarray}
  Using (\ref{newbound}), we can bound the first summand in (\ref{bound}) by:
 \[
  | \gamma_0 u f_u(A, s)| 
    \leq |\gamma_0| \left( t \left( |\beta| |u| + \frac{|u|^2}{2} \right) + |\rho| |u| \right) =: B_1(u).
   \]
   For the second summand, using Taylor's theorem we have that
   $
    \left| e^{u f_u(A,s) x} -1 \right| \leq |u f_u(A, s)| |x| + O(|u f_u(A, s)|^2 |x|^2).
   $
   Since
   $
   |u  f_u(A, s)| \leq t \left( |\beta| |u| + \frac{|u|^2}{2} \right) + |\rho| |u|,
   $
   for the remainder term of Taylor's formula we have
   \[
    O(|u f_u(A,s)|^2 |x|^2) \leq O \left( \left| t \left( |\beta| |u| + \frac{|u|^2}{2} \right) + |\rho| |u| \right|^2 |x|^2 \right) ,
   \]
   where the latter term converges to zero as $x \rightarrow 0$.
   If we define
   $$
   K(u) := t \left( |\beta| |u| + \frac{|u|^2}{2} \right) + |\rho| |u|
   $$
   we obtain that
   \begin{flalign*}
    \int_{ x \leq 1 } \left| e^{u f_u(A,s) x} -1 \right| \nu(dx) & \leq K(u) \int_{ x \leq 1 } x \nu(dx) + \int_{ x \leq 1 } O \left( K(u)^2 |x|^2 \right) \nu(dx) =: B_2(u),
   \end{flalign*}
   which is finite due to the properties of the measure $\nu$.

   Let
   $
    S_n := \left\{  \bbc \ni u = v + iw  : \quad |v| \leq \delta -{1}/{n} \right\} \subseteq S.
   $
   Since the function $v f_v(A,s)$ is continuous on the compact set $V_n = \left\{ v \in \bbr: \text{ } |v| \leq \delta - {1}/{n} \right\}$, it attains its minimum and maximum on that set, i.e. there exists $v^* \in V_n$ such that 
$
     v f_v(A,s) \leq v^* f_{v^*} (A,s) \leq |v^* f_{v^*} (A,s)|  =: K_n(u) $ for all $v \in V_n$.
   Note that $v^* \in V_n$ implies that $K_n(u) < \epsilon$. Since $\operatorname{Re}(u f_u(A, s)) \leq v f_v(A,s)$ and $\left| e^{u f_u(A,s) x}\right| = e^{\operatorname{Re}( u f_u(A,s)) x} \leq e^{K_n(u) x}$, it follows that
   \begin{flalign*}
    \int_{ x > 1 } \left| e^{u f_u(A,s) x} -1 \right| \nu(dx) & \leq \int_{ x > 1 } e^{ K_n(u) x} \nu(dx) + \int_{ x > 1 } \nu(dx)  =: B_{3,n}(u),
   \end{flalign*}
   which is finite due to (\ref{condition}) and the properties of the measure $\nu$.

   Since $B_1(u)$, $B_2(u)$ and $B_{3,n}(u)$ do not depend neither on $s$ nor on $A$, we have
$
    |\varphi (u, A)| \leq t (B_1(u) + B_2(u) + B_{3,n}(u))
$
   and
$
    \int_{\bbr_-} t ( B_1(u) + B_2(u) + B_{3, n}(u)) \pi (dA) = t ( B_1(u) + B_2(u) + B_{3,n}(u)) < \infty,
$
    so the function $t (B_1(u) + B_2(u) + B_{3, n}(u))$ is integrable with respect to $\pi$. Since $\varphi (u, A)$ is analytic and thus a continuous function on $S_n$ for all $A < 0 $,  it also holds that $|\varphi (u, A)|$ is continuous on $S_n$ for all $A < 0$. By the dominated convergence theorem it follows that $\int_{\bbr_-}  \left| \varphi(u, A) \right| \pi(dA)$ is continuous and thus a locally bounded function on $S_n$. Since $n \in \bbn$ was arbitrary, it follows that the function is continuous and locally bounded on $S$, which completes the proof.
 \end{proof}
Now we can easily give conditions ensuring that (ii) in Theorem \ref{fourier1} is satisfied. 

\begin{corollary}\label{strip1}
  Let  $\int_{x \geq 1 } e^{rx} \nu(dx) < \infty \quad \text{for all } r \in \bbr \text{ such that } r < \epsilon$
 for some $\epsilon > 0$. Then the moment generating function $\Phi_{X_T|\mathscr{G}_0}$ is analytic on the open strip
$
  S := \{ u \in \bbc : \quad |\operatorname{Re}(u)| < \delta \} 
$
 with
   $
    \delta := -| \beta | - \frac{|\rho|}{T} + \sqrt{\Delta}
 $
   where
 $
    \Delta := \left( |\beta| + \frac{|\rho|}{T} \right)^2 +  \frac{2 \epsilon}{T}.
 $
 Furthermore,
 \begin{align}\label{mgf}
 & \Phi_{X_T|\mathscr{G}_0} (u) = \\
&\,\,\,\exp \left\{ u(X_0 + \mu T) +  \left( u \beta + \frac{1}{2} u^2 \right) \int_{\bbr_-} \int_{-\infty}^0 \frac{1}{A} \left( e^{A(T-s)} - e^{-As} \right) \Lambda(dA, ds) + \Theta (u) \right\}\nonumber
 \end{align}
 for all $u \in S$.
 \end{corollary}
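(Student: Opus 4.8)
The plan is to read \eqref{mgf} as the analytic continuation of the conditional characteristic function \eqref{chfn} from the imaginary axis to the strip $S$, so that the substance of the proof is already contained in Theorem~\ref{theorem_ch_fn} and Theorem~\ref{strip}. First I would perform the substitution that sends the purely imaginary argument $iu$ of $\Phi_{X_T|\mathscr{G}_0}$ in \eqref{chfn} to a general complex argument $u$, and check the correspondence term by term: the outer factor $i$ together with the factors $i$ inside the bracket turn $i\,u(X_0+\mu T)$ into $u(X_0+\mu T)$ and $i\bigl(u\beta+\tfrac{i}{2}u^2\bigr)z_T$ into $\bigl(u\beta+\tfrac12 u^2\bigr)z_T$. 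The decisive summand is the last one: under the same substitution the argument of $\varphi$ in \eqref{chfn} becomes $-i\,u f_u(A,s)$ with $f_u$ as in \eqref{eq:f}, and since $\varphi(w)=iw\gamma_0+\int_{\bbr_+}(e^{iwx}-1)\nu(dx)$ we get $\varphi(-i\,u f_u(A,s))=\gamma_0\,u f_u(A,s)+\int_{\bbr_+}(e^{u f_u(A,s)x}-1)\nu(dx)=\theta_L(u f_u(A,s))$. Hence the $\varphi$-integral in \eqref{chfn} is carried into $\Theta(u)$, and the whole exponent of \eqref{chfn} is carried into the exponent of \eqref{mgf}.

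Next I would prove that the right-hand side of \eqref{mgf}, call it $G(u)$, is analytic on $S$. Applying Theorem~\ref{strip} with $t=T$, the map $\Theta$ is analytic on $S$; the two remaining summands $u(X_0+\mu T)$ and $\bigl(u\beta+\tfrac12 u^2\bigr)z_T$ are polynomials in $u$ with $\mathscr{G}_0$-measurable coefficients, hence entire. Since $\exp$ is entire, $G$ is analytic on $S$. Moreover, for real $u\in(-\delta,\delta)$ the quantity $G(u)$ is finite, because there the estimate $|u f_u(A,s)|<\epsilon$ obtained in the proof of Theorem~\ref{strip} makes $\theta_L(u f_u(A,s))$ and hence $\Theta(u)$ finite.

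It then remains to identify $G$ with the conditional moment generating function $u\mapsto\mathbb{E}(e^{uX_T}|\mathscr{G}_0)$. By Theorem~\ref{theorem_ch_fn}, $G(iu)=\mathbb{E}(e^{iuX_T}|\mathscr{G}_0)$ for all real $u$, so $G$ is an analytic extension to $S$ of the conditional characteristic function. I would then invoke the classical theorem on analytic characteristic functions: a characteristic function that admits an analytic extension to a horizontal strip is represented there by the bilateral Laplace transform, so the conditional moment generating function $\mathbb{E}(e^{uX_T}|\mathscr{G}_0)$ exists for $u\in S$ and coincides with $G(u)$. Equivalently, one first verifies $\mathbb{E}(e^{uX_T}|\mathscr{G}_0)<\infty$ for real $u\in(-\delta,\delta)$ — by repeating the computation behind Theorem~\ref{theorem_ch_fn} with a real exponent, legitimate precisely because $|u f_u(A,s)|<\epsilon$ ensures convergence of all integrals — so that $\Phi_{X_T|\mathscr{G}_0}$ is analytic on $S$, and then applies the identity theorem: $\Phi_{X_T|\mathscr{G}_0}$ and $G$ are analytic on the connected set $S$ and agree on $(-\delta,\delta)$, a set with an accumulation point, hence on all of $S$. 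This proves \eqref{mgf} and the analyticity, and in particular supplies condition (ii) of Theorem~\ref{fourier1} for real $R\in(-\delta,\delta)$.

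The algebraic substitution and the composition-with-$\exp$ analyticity are routine; the genuine obstacle is the last step, namely guaranteeing that the function produced by the formal continuation $iu\mapsto u$ really is the conditional moment generating function and not merely some analytic function agreeing with the characteristic function on the imaginary axis. The clean way around this is the identity-theorem (or analytic-characteristic-function) argument above, whose only nontrivial input beyond Theorem~\ref{strip} is the finiteness of $\mathbb{E}(e^{uX_T}|\mathscr{G}_0)$ on the real interval $(-\delta,\delta)$, which the bound from Theorem~\ref{strip} delivers.
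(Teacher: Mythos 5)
Your proposal is correct and follows essentially the same route as the paper, whose proof is the one-line remark that the corollary ``follows from Theorems \ref{theorem_ch_fn} and \ref{strip} noting that an analytic function is uniquely identified by its values on a line'' together with the analytic-characteristic-function lemma of Muhle-Karbe, Pfaffel and Stelzer (their Lemma A.1), which is exactly the classical result you invoke to pass from the formal continuation to the genuine conditional moment generating function. Your write-up is simply a fleshed-out version of that argument, with the term-by-term substitution $iu\mapsto u$ and the identity-theorem step made explicit.
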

\begin{proof}
 Follows from Theorems \ref{theorem_ch_fn} and  \ref{strip} noting that an analytic function is uniquely identified by its values on a line and \cite[Lemma A.1]{MuhleKarbePfaffelStelzer2009}.
\end{proof}
Very similar to \cite[Th. 6.11]{MuhleKarbePfaffelStelzer2009} we can now prove that also condition (iii) in Theorem~\ref{fourier1} is satisfied for the supOU SV model.
 \begin{theorem}\label{strip3}
 If $u \in \bbc$, $u = v + iw$ and $u \in S$ as defined in Theorem \ref{strip}, then the map
 \begin{equation*}
   w \mapsto \Phi_{X_T|\mathscr{G}_0} (v + iw)
 \end{equation*}
 is absolutely integrable.
\end{theorem}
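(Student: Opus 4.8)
The plan is to work directly from the explicit formula \equref{mgf} for $\Phi_{X_T|\mathscr{G}_0}(u)$ established in Corollary \ref{strip1}, and to show that along the vertical line $u = v+iw$ with $|v|<\delta$ fixed the modulus $|\Phi_{X_T|\mathscr{G}_0}(v+iw)|$ is dominated by a Gaussian in $w$, which is trivially integrable over $\bbr$. Since $|\Phi_{X_T|\mathscr{G}_0}(v+iw)| = \exp\{\operatorname{Re}(\cdots)\}$, everything reduces to producing an upper bound of the form $\operatorname{Re}(\text{exponent}) \le C(v) - \tfrac12 K w^2$ with $K>0$ and $C(v)$ independent of $w$.

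First I would split the exponent in \equref{mgf} into its three summands and take real parts separately. The affine term $u(X_0+\mu T)$ contributes only the $w$-independent constant $v(X_0+\mu T)$. For the quadratic term, writing $u\beta+\tfrac12 u^2 = (\beta v+\tfrac12 v^2 - \tfrac12 w^2) + i(\cdots)$ and using that $z_T = \int_{\bbr_-}\int_{-\infty}^0\frac1A(e^{A(T-s)}-e^{-As})\Lambda(dA,ds)$ is real and nonnegative (its integrand equals $\frac{e^{-As}}{A}(e^{AT}-1)\ge0$ for $A<0$ as in the proof of Lemma \ref{lem:z}, and $\Lambda$ is $\bbr_+$-valued), its real part is $z_T(\beta v+\tfrac12 v^2) - \tfrac12 w^2 z_T$; the crucial decaying term $-\tfrac12 w^2 z_T$ appears here. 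For the $\Theta$-term I would use the identity $\operatorname{Re}(uf_u(A,s)) = v f_v(A,s) - \tfrac{w^2}{2}\,\frac{e^{A(T-s)}-1}{A}$ already derived in the proof of Theorem \ref{strip}, abbreviating $g(A,s):=\frac{e^{A(T-s)}-1}{A}\in[0,T]$.

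The main obstacle is to control the jump part of $\operatorname{Re}\Theta$, since there the quadratic decay in $w$ sits inside an exponential and competes with the oscillating cosine. I would bound $\operatorname{Re}\,\theta_L(uf_u(A,s)) = \gamma_0\operatorname{Re}(uf_u) + \int_{\bbr_+}(e^{\operatorname{Re}(uf_u)x}\cos(\operatorname{Im}(uf_u)x)-1)\,\nu(dx)$ using $\cos\le1$ together with $e^{\operatorname{Re}(uf_u)x} = e^{(vf_v-\frac{w^2}2 g)x}\le e^{vf_v x}$ (valid since $g\ge0$), while retaining the linear occurrence of $-\tfrac{w^2}2 g$ in the $\gamma_0$ summand. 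This yields $\operatorname{Re}\,\theta_L(uf_u)\le \theta_L(vf_v) - \gamma_0\tfrac{w^2}{2}g$, and integrating in $s$ and $\pi$ gives $\operatorname{Re}\Theta(v+iw)\le \Theta(v) - \gamma_0\tfrac{w^2}{2}\int_{\bbr_-}\int_0^T g\,ds\,\pi(dA)$, where $\Theta(v)$ is finite by Theorem \ref{strip} and the integral of $g$ is finite and strictly positive.

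Collecting the three contributions gives $|\Phi_{X_T|\mathscr{G}_0}(v+iw)|\le e^{C(v)}\exp\{-\tfrac12 K w^2\}$ with $K = z_T + \gamma_0\int_{\bbr_-}\int_0^T g\,ds\,\pi(dA)\ge0$, from which absolute integrability in $w$ follows at once as soon as $K>0$. The Gaussian decay is thus supplied precisely by the past integrated volatility $z_T$ together with the Gaussian drift coefficient $\gamma_0$; I would note that $K>0$ in all non-degenerate situations, and that in the borderline case $K=0$ (only possible when $z_T=0$ and $\gamma_0=0$) one can retain, rather than discard, the factor $e^{-\frac{w^2}2 g x}$ inside the jump integral to recover decay when $\nu$ is infinite. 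The remaining estimates are routine and parallel to \cite[Th. 6.11]{MuhleKarbePfaffelStelzer2009}.
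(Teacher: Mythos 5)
Your proposal is correct and follows essentially the same route as the proof the paper delegates to \cite[Th.~6.11]{MuhleKarbePfaffelStelzer2009}: take real parts in the exponent of \equref{mgf}, extract the Gaussian decay $-\tfrac{w^2}{2}z_T$ from the $\mathcal{G}_0$-measurable integrated-volatility term (plus the $\gamma_0$-contribution inside $\Theta$), and bound the remaining jump integral uniformly in $w$ via $\cos\le 1$ and $e^{\operatorname{Re}(uf_u)x}\le e^{vf_v x}$. Your worry about the borderline case $K=0$ is essentially vacuous: since the integrand $\frac{e^{-As}}{A}(e^{AT}-1)$ is strictly positive and $\Lambda$ has a.s.\ infinitely many jumps in $\bbr_-\times(-\infty,0]$ whenever $\nu\neq 0$ (and a positive drift contribution whenever $\gamma_0>0$), one has $z_T>0$ a.s.\ except in the fully degenerate model $\Lambda\equiv 0$.
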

\section{Examples}\label{sec:4}
\subsection{Concrete specifications}
If we want to price a derivative by Fourier inversion, then this means in the supOU SV model that we have to calculate in general something similar to a three dimensional integral, the inverse Fourier transform and the double integral  in  $\Theta(u) =  \int_{\bbr_-} \int_0^t \theta_L (u f_u(A,s) ) ds \pi(dA)$. If we want to calibrate our model to market data, the optimizer will repeat this procedure very often and so it is important to consider specifications where at least some of the integrals can be calculated analytically.

Actually, it is not hard to see that one can use the standard specifications for $\nu$ of the OU type stochastic volatility model (see  \cite{Schoutens2003,NicolatoVenardos2003,Contetal2004,Barndorffetal2001c}) which are named after the resulting stationary distribution of the OU type processes.

As in the case of  a $\Gamma$-OU process we can choose the underlying L\'{e}vy process to be a compound Poisson process with the characteristic triplet $(\gamma_0, 0, abe^{-bx} \mathbf{1}_{\{ x > 0 \}})$ with $a,b>0$. Furthermore, we assume that $A$ follows a ``negative'' $\Gamma$-distribution, i.e. that $\pi$ is the distribution of $BR$, where $B \in \bbr_-$ and $R \sim \Gamma (\alpha, 1)$ with $\alpha > 1$ which is the specification typically used to obtain long memory/a polynomial decay of the acf.
We refer to this specification as the \emph{$\Gamma$-supOU SV model}.  

Using \eqref{eq:f} we have
\begin{equation*}
 \Theta (u) = u  \int_{\bbr_-} \int_0^t \gamma_0 f_u(A,s) ds \pi(dA) + \int_{\bbr_-} \int_0^t \int_{\bbr_+} \left( e^{u f_u(A,s) x} -1 \right) \nu(dx) ds \pi(dA) .
\end{equation*}

For the first summand in $\Theta(u)$ we see
\begin{multline*}
 u  \int_{\bbr_-} \int_0^t  \gamma _0 f_u(A,s) ds \pi(dA) = \gamma_0 \Bigg( \underbrace{\int_{\bbr_-} \int_0^t \frac{e^{A(t-s)}}{A} \left( u \beta + \frac{u^2}{2} \right) ds \pi(dA)}_{I_1} \\
  - \underbrace{\int_{\bbr_-} \int_0^t \frac{1}{A} \left( u \beta + \frac{u^2}{2} \right) ds \pi(dA)}_{I_2} + \underbrace{\int_{\bbr_-} \int_0^t \rho u ds \pi(dA)}_{I_3} \Bigg).
\end{multline*}
For the three parts we can now show:
\begin{align*}
 I_1 
 & = \left( u \beta + \frac{u^2}{2} \right) \frac{(1 - Bt)^{2 - \alpha} - 1}{B^2(\alpha - 1)(\alpha - 2)}\mbox{ if }\alpha \neq 2,
\\
  I_1 &= - \frac{\left( u \beta + \frac{u^2}{2} \right)}{B^2} \ln(1-Bt)\mbox{ if }\alpha = 2,
\\
 I_2& = \frac{t \left( u \beta + \frac{u^2}{2} \right) }{B (\alpha - 1)},\quad\quad
 I_3 = \rho u \int_0^t \int_{\bbr_-} ds \pi(dA) = \rho u t.
\end{align*}
Furthermore setting $
 C(A) := \frac{1}{A} \left( u \beta + \frac{u^2}{2} \right) - \rho u
$ one obtains for the second summand in $\Theta$
\begin{align*}
&\int_{\bbr_-} \int_0^t \int_{\bbr_+} \left( e^{u f_u(A,s) x} -1 \right) a b e^{-bx} dx ds \pi(dA) \\
&\quad = a \int_{\bbr_-} \frac{1}{A(b+C(A))} \left( b \ln \left( \frac{b - \rho u}{b - \frac{e^{At}}{A} \left( u \beta + \frac{u^2}{2} \right) + C(A)  } \right) - AC(A)t \right) \pi(dA).
\end{align*}
Unfortunately, a more explicit formula for this integral cannot be obtained, and the last integral has to be calculated numerically.

%
 We can also choose the underlying L\'{e}vy process as in an IG-OU model with parameters $\delta$ and $\gamma$, while keeping the choice of the measure $\pi$ the same. In this case we have $\nu(dx)=\frac{1}{2 \sqrt{2 \pi}} \delta \left( x^{-1} + \gamma ^2 \right) x^{-\frac{1}{2}} \exp \left( -\frac{1}{2} \gamma^2 x \right) \mathbf{1}_{ \{x > 0 \} }dx$ and  the only difference compared to the previous case is in the calculation of the triple integral which also can be partially calculated analytically so that only a one-dimensional numerical integration is necessary.
%
%
%

\subsection{Calibration and an illustrative example}

In this chapter, we calibrate the $\Gamma-$supOU SV model to market prices of European plain vanilla call options written on the DAX. 

Let $t_1$, $t_2$,..., $t_M$ be the set of different times to maturity (in increasing order) for which we have market option prices. The parameters to be determined by calibration are $\Pi = (\rho, a, b, B, \alpha, \gamma_0, z_{t_1}, ..., z_{t_M})$, where $\rho$ is the leverage parameter, the parameters $a$ and $b$ are parameters of the measure $\nu$, the parameters $B$ and $\alpha$ are parameters of the measure $\pi$ and $\gamma_0$ is the drift parameter. The parameters $z_{t_1}, ..., z_{t_M}$ are resembling
\begin{equation*}
  z_{t_i} = \int_{\bbr_-} \int_{-\infty}^0 \frac{1}{A} \left( e^{A(t_i-s)} - e^{-As} \right) \Lambda(dA, ds), \quad i = 1,...,M.
\end{equation*}

We calibrate by minimizing the root mean squared error between the Black-Scholes implied volatilities corresponding to market and model prices, i.e. 
\begin{equation*}
  \text{RMSE} = \sqrt{\sum_{i=1}^M \sum_{j=1}^{N_M} \omega_{ij} \left( \text{blsimpv }\big( C_{ij}^M \big) -  \text{blsimpv}\left( C_{ij} \right) \right)^2},
\end{equation*}
where $M$ is the number of different times to maturity, $N_M$ is the number of options for each maturity, $\left\{ C_{ij}^M \right\}$ is the set of market prices and $\left\{ C_{ij} \right\}$ is the set of model prices, $i=1,...,N_M$, $j = 1,...,M$. Of course, minimizing the difference between Black-Scholes implied volatilities is just one possible choice for the objective function. We note that this data example is only supposed to be an illustrative proof of concept and that using other objective functions including in particular weights for the different options should improve the results.

We use closing prices of 200 DAX options on August 19th, 2013. The level of DAX on that day was 8366.29.
The data  source was Bloomberg Finance L.P. and all the options were listed on EUREX. 

For the instantaneous risk-free interest rate we used the 3-month LIBOR rate, which was 0.15173 \%. The maturities of the options were 31, 59, 87, 122, 213, 304, 486 and 668 days. The calibration procedure was performed in MATLAB.

The implied parameters from the calibration procedure are given in Table \ref{dax}. The fit is good: The RMSE  is 0.0046. We plot  market against  model Black-Scholes implied volatilities in Figure \ref{fig:impvol}. Although the RMSE is very low and in plots of market against fitted model prices (not shown here) one sees basically no differences, Figure \ref{fig:impvol} shows that our model fits the implied volatilities for medium and long maturities very well, but the quality of the fit for shorter maturities is lower.  

The vector of the parameters $\{z_{t_i}\}_{i = 1,...,M}$ is indeed increasing with maturity (cf. Lemma \ref{lem:z}), although we actually refrained from including this restriction into our optimization problem. The autocorrelation function of the $\Gamma$-supOU model exhibits long memory for $\alpha \in (1,2)$ (cf. \cite[Section 2.2]{StelzerTosstorff2011}). Since the calibration returns $\alpha = 4.3632$, our market data does not support that long memory is  present. However, $\alpha = 4.3632$ means that the market data is in line with a rather slow polynomial decay of the  autocorrelation function, which is in contrast to the exponential decay of the  autocorrelation function in the OU type SV model. The leverage parameter $\rho$ is negative, which implies a negative correlation between jumps in the volatility and returns. Hence, the typical leverage effect is present. 

The drift parameter of the underlying L\'evy basis $\gamma_0$ is  estimated to be practically zero. So our calibration suggests that a driftless pure jump L\'evy basis may be quite adequate to use.  

If we  compare the supOU stochastic volatility model to the OU type stochastic volatility models (cf. \cite{MuhleKarbePfaffelStelzer2009} or \cite{NicolatoVenardos2003}), we can conclude that neither of the models seems to capture the short-term skew in the implied volatility extremely well. The OU type stochastic volatility models reproduce many of the stylized facts such as jumps in the volatility, (semi-)heavy-tailed distribution of the returns, dependence of the returns without correlation, but they are unable to exhibit long memory (in the squared returns). On the other hand, the supOU stochastic volatility model is able to reproduce additionally 
long memory under certain conditions, but the price that has to be paid for it is that the dimensionality of the optimization problem (the number of parameters) increases with the number of maturities considered.

\begin{table}[t]
\caption{Calibrated parameters for DAX data of August 19th, 2013}\label{dax}
\begin{center}
\begin{tabular}{|c|c|c|c|c|c|}
 \hline
  $\rho$ & $a$ & $b$ & $B$ & $\alpha$ & $\gamma_0$   \\ \hline
  -10.8797 & 0.2225 & 29.4025 & -0.0004 & 4.3632 & 0.0000
  \\ \hline
\end{tabular}
\begin{tabular}{|c|c|c|c|c|c|c|c|}
  \hline
   $z_{t_1}$ & $z_{t_2}$ & $z_{t_3}$ & $z_{t_4}$ & $z_{t_5}$ & $z_{t_6}$ & $z_{t_7}$ & $z_{t_8}$ \\ \hline
   0.0012 & 0.0026 & 0.0038 & 0.0054 & 0.0093 & 0.0136 & 0.0225 & 0.0328
   \\ \hline
\end{tabular}
\end{center}

\end{table}

\begin{figure}[t]
\begin{center}
  \includegraphics[width=0.99\textwidth]{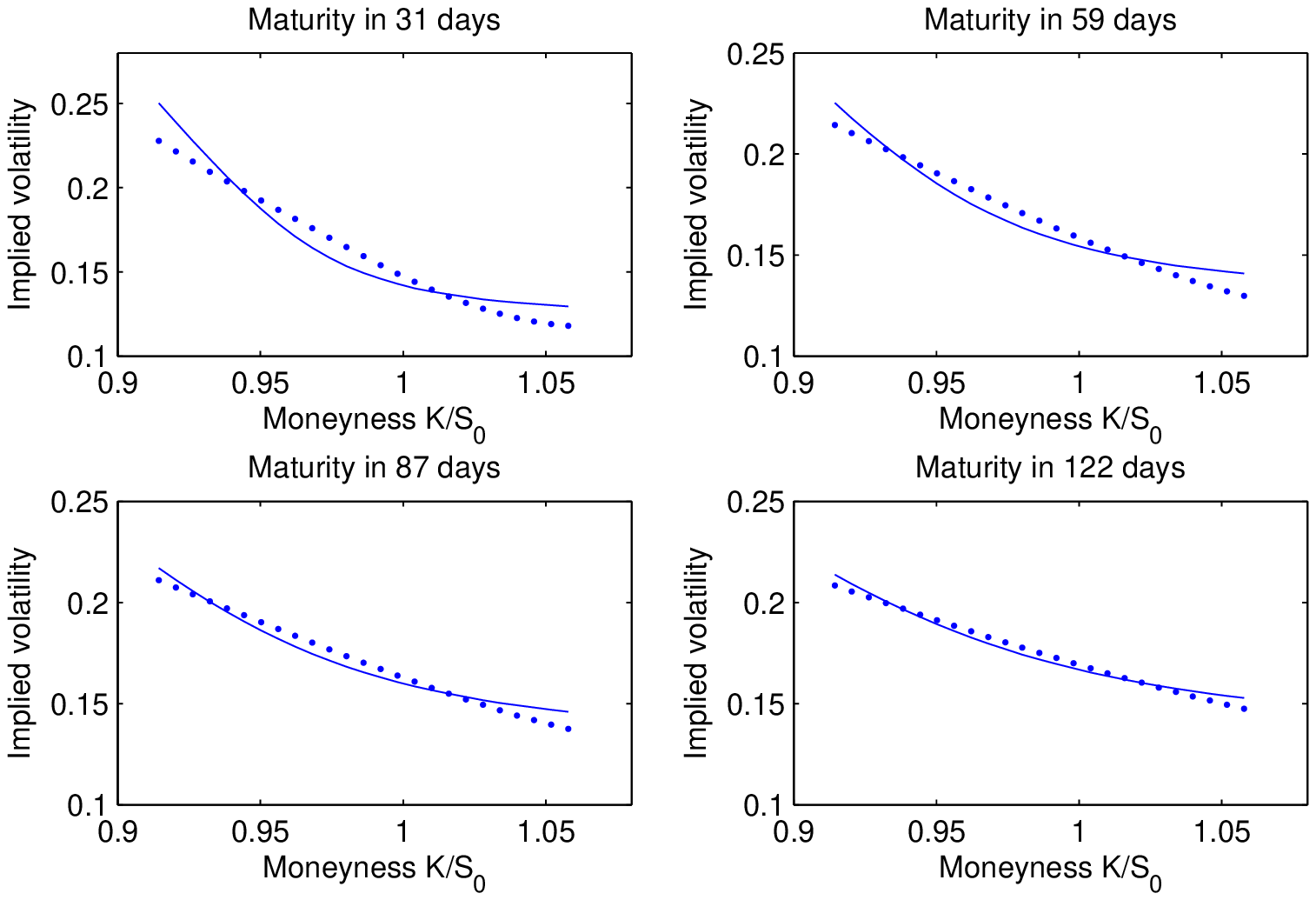}
  \vspace{0cm}
  \includegraphics[width=0.99\textwidth]{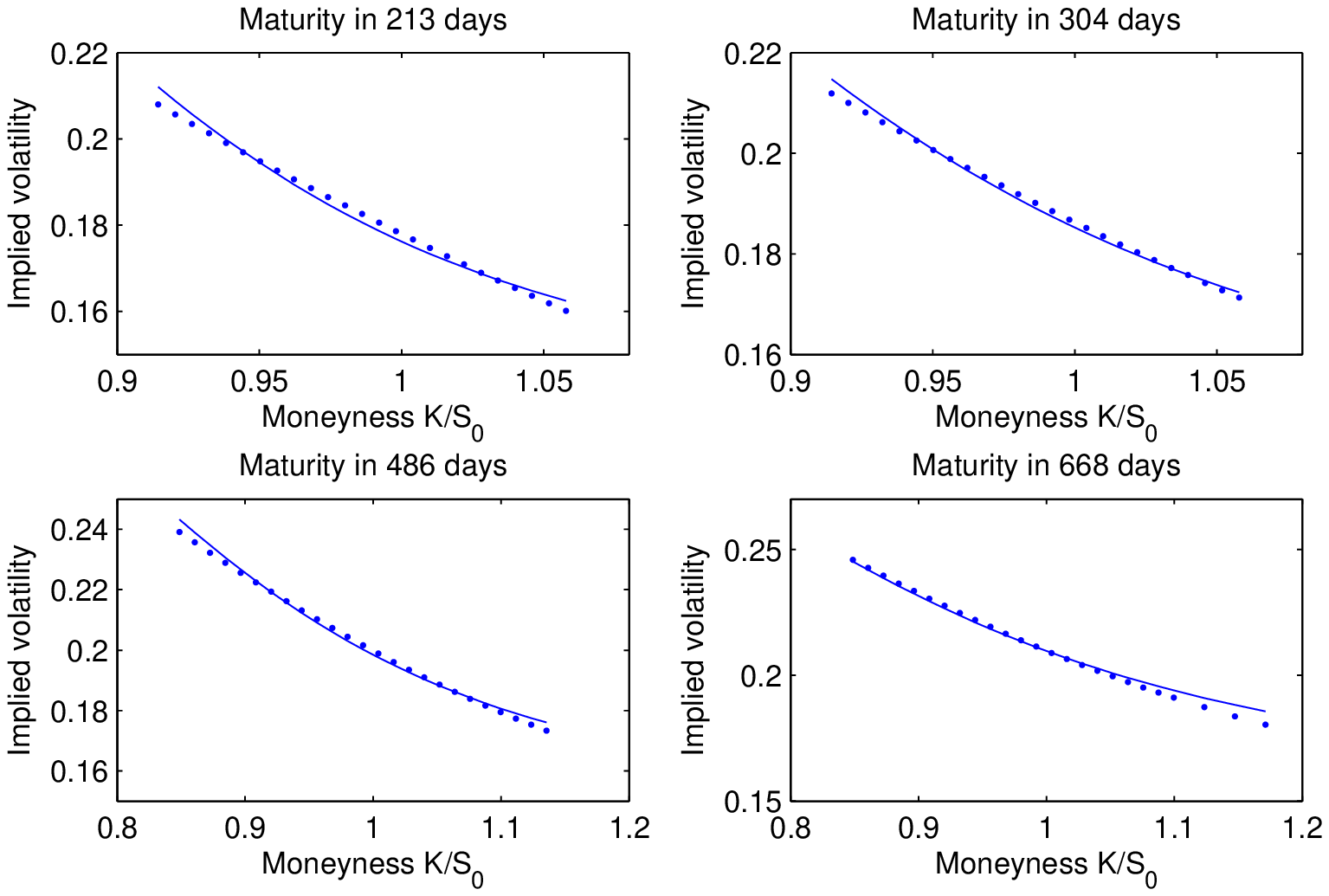}
\end{center}
  \caption{\textbf{Calibration of the supOU model to call options on DAX}: The Black-Scholes implied volatilities. The implied volatilities from market prices are depicted by a dot, the implied volatilities from model prices by a solid line.}\label{fig:impvol}

\end{figure}
\subsection{How to price options with general maturities?}
After having calibrated a model to observed liquid market prices one often wants to use it to price other (exotic) derivatives. Looking at a European derivative with payoff $f(S_T)$ for some measurable function $f$ and maturity $T>0$ one soon realizes that we can only obtain its price directly if $T\in 
\{t_1,t_2,\ldots, t_M\}$ (in other words we can only price derivatives with a maturity for which we have liquid market option prices), as only then we know $z_T$, thus the characteristic function $ \Phi_{X_T|\mathscr{G}_0}$ and therefore the distribution of the price process at time $T$ conditional on our current information $\mathcal{G}_0$. Of course, this is not desirable and the problem is that we do assume that we know $\mathcal{G}_0$ in theory, but in practice we have only limited information in the market prices which we can use to get only parts of the information in $\mathcal{G}_0$.

It seems that to get $z_t$ for  all $t\in\bbr_+$ one needs to really know the whole past of $\Lambda$, i.e. all jumps before time $0$ and the associated times and decay rates. This is clearly not feasible. A detailed analysis on the dependence of $z_t$ on $t$ is beyond the scope of this paper. But we briefly want to comment on possible ad hoc solutions to ``estimate'' $z_T$ based on $\{z_{t_i}\}_{i = 1,...,M}$. The first one is to either interpolate or fit a parametric curve $t\mapsto z_t$ to the ``observed'' $\{z_{t_i}\}_{i = 1,...,M}$. If one also ensures the decreasingness in $t$in this procedure, one should get a reasonably good approximation, especially when the grid given by $\{{t_i}\}_{i = 1,...,M}$ is fine and one considers maturities in $[t_1,t_M]$.

From the probabilistic point of view one would  like to compute $E(z_t|\{z_{t_i}\}_{i = 1,...,M})$ for $t\not \in \{t_1,t_2,\ldots, t_M\}$. Whether and how this conditional expectation can be calculated, is again a question for future investigations. But what one can calculate easily is the best (in the $L^2$ sense) linear predictor of $z_t$ given $\{z_{t_i}\}_{i = 1,...,M}$. One simply needs to straightforwardly adapt standard time series techniques (like the innovations algorithm or linear $L^2$ filtering, see e.g. \cite{Brockwelletal1991}) noting that one has \[\operatorname{cov}(z_t,z_u)=\int_{\bbr_-}\int_{\bbr_-}\frac{e^{-2As}}{A^2}(e^{At}-1)(e^{Au}-1)ds\pi(dA)\int_{\bbr_+}x^2\nu(dx)\,\,\forall \,t,u\in\bbr_+.\]
{\small

}

\end{document}